\definecolor{orange}{rgb}{1.0, 0.5, 0.0}
\newcommand{\word}{\Sigma^*}
\newcommand{\wordset}{\mathcal{W}}
\newcommand{\roots}{r}
\newcommand{\process}{\mathsf{Process}}
\newcommand{\intv}[1]{\left [ #1 \right ]}
\newcommand{\tw}{{\sf{tw}}}
\newcommand{\w}{{\sf{w}}}
\newcommand{\graphs}{\mathcal{G}}
\newcommand{\graphproblem}{\mathcal{H}}
\newcommand{\acore}{\mathfrak{C}}
\newcommand{\agraph}{G}
\newcommand{\accepting}{\mathsf{Accept}}
\newcommand{\timecomplexity}{\tau}
\newcommand{\size}{\mathsf{Size}}
\newcommand{\partition}{\mathsf{VertPart}}
\newcommand{\edgepartition}{\mathsf{EdgePart}}
\newcommand{\graphpartition}{\mathsf{GraphPart}}
\newcommand{\Ocal}{\mathcal{O}}
\newtheorem{theorem}{Theorem}
\newtheorem{definition}[theorem]{Definition}
\newtheorem{lemma}[theorem]{Lemma}
\newtheorem{corollary}[theorem]{Corollary}
\newtheorem{observation}[theorem]{Observation}
\author{Julien Baste}
\title{Composing dynamic programming~ \\ tree-decomposition-based algorithms}
\affiliation{
  Univ. Lille, CNRS, Centrale Lille, UMR 9189 CRIStAL, F-59000 Lille, France}
\keywords{graph partition, treewidth, parameterized complexity, dynamic programming, dynamic programming core model}
\begin{document}
\publicationdata{vol. 26:2}{2024}{1}{10.46298/dmtcs.11069}{2023-03-14;
  2023-03-14; 2024-01-25}{2024-02-06}

\maketitle

\begin{abstract}

\medskip

  Given two integers $\ell$ and $p$ as well as $\ell$ graph classes $\mathcal{H}_1,\ldots,\mathcal{H}_\ell$, the problems
$\mathsf{GraphPart}(\mathcal{H}_1, \ldots, \mathcal{H}_\ell,p)$, \break
$\mathsf{VertPart}(\mathcal{H}_1, \ldots, \mathcal{H}_\ell)$, and
$\mathsf{EdgePart}(\mathcal{H}_1, \ldots, \mathcal{H}_\ell)$ ask, given graph $G$ as input, whether $V(G)$, $V(G)$, $E(G)$ respectively can be partitioned into $\ell$ sets $S_1, \ldots, S_\ell$ such that, for each $i$ between $1$ and $\ell$, $G[S_i] \in \mathcal{H}_i$, $G[S_i] \in \mathcal{H}_i$,  $(V(G),S_i) \in \mathcal{H}_i$ respectively.
Moreover in $\mathsf{GraphPart}(\mathcal{H}_1, \ldots, \mathcal{H}_\ell,p)$, we request that the number of edges with endpoints in different sets of the partition is bounded by $p$.
We show that if there exist dynamic programming tree-decomposition-based algorithms for recognizing the graph classes $\mathcal{H}_i$, for each $i$, then we can constructively create a dynamic programming tree-decomposition-based algorithms for
$\mathsf{GraphPart}(\mathcal{H}_1, \ldots, \mathcal{H}_\ell,p)$,
$\mathsf{VertPart}(\mathcal{H}_1, \ldots, \mathcal{H}_\ell)$, and
$\mathsf{EdgePart}(\mathcal{H}_1, \ldots, \mathcal{H}_\ell)$.
We apply this approach to known problems.
For well-studied problems, like \textsc{Vertex Cover} and \textsc{Graph $q$-Coloring}, we obtain running times that are comparable to those of the best known problem-specific algorithms.
For an exotic problem from bioinformatics, called \textsc{DisplayGraph}, this approach improves the known algorithm parameterized by treewidth.
\end{abstract}

\section{Introduction}

In one of the first graph partition problems, one is asked, given a graph $G$ and an integer $k$, whether $V(G)$ can be partitioned into $\ell$ sets $V_1, \ldots, V_\ell$ such that the number of edges between two different sets is small, see for instance~\cite{GoHo1994}.
These problems have many applications starting from clustering genes by~\cite{ShMaSh2003}, through optimizing financial
problems by~\cite{MeWe2012}, parallel scientific computing to image segmentation by~\cite{GrSc2006} and \cite{ToMo2012}, and analysis of social networks by~\cite{QiYaWa2010}.
The above specified graph partitioning problem favors cutting small sets of isolated vertices in the input graph as shown by~\cite{ShMa2000} and \cite{WuLe1993}.
In order to avoid this kind of solution which is often undesirable for many practical applications, restrictions are often imposed on the sets $V_i$, $i \in \intv{1,\ell}$.
The most natural restriction is to require the partition to be balanced as done by~\cite{AnRa2006}.
Another one, used in image segmentation, is to consider normalized cuts as done by~\cite{ShMa2000}, that is, cuts that maximize the similarity within the sets while minimizing the dissimilarity between the sets.
In social networks, the graph clustering problem is a graph partition problem where the graphs $G[V_i]$, $i \in \intv{1,\ell}$, are required to be dense as studied by~\cite{Sc2007}.
In this paper, we consider the graph partition problem in a general form defined in the following way.
Given $\ell$ graph classes $\mathcal{H}_1, \ldots, \mathcal{H}_\ell$ and an integer $p$, the $\graphpartition(\mathcal{H}_1, \ldots, \mathcal{H}_\ell,p)$ problem consists in, given a graph $G$, determining whether $V(G)$ can be partitioned into $\ell$ sets $V_1,\ldots, V_\ell$ such that $\{\{u,v\} \in E(G) \mid u \in V_i, v \in V_j, i \not = j\}$, \textit{i.e.}, the set of \emph{transversal} edges, is of size at most $p$ and $G[V_i] \in \mathcal{H}_i$ for each $i \in \intv{1,\ell}$.

Coloring problems are special kinds of graph partition problems where the number of transversal edges  is not relevant anymore.
So, in the $\partition(\mathcal{H}_1, \ldots, \mathcal{H}_\ell)$ problem,
the task is to determine whether the vertex set of the input graph $G$ can be partitioned into $\ell$ sets $V_1, \ldots, V_\ell$ such that $G[V_i] \in \mathcal{H}_i$ for each $i \in \intv{1,\ell}$.
The most famous coloring problem is the \textsc{Graph $3$-Colorability} problem
corresponding to $\partition(\mathcal{I},\mathcal{I},\mathcal{I})$ where
$\mathcal{I}$ is the class of edgeless graphs.
This problem is one of the first problems proved to be \textsf{NP}-hard by~\cite{Ka1972} and has attracted a lot of attention.
While \textsc{Graph $3$-Colorability} is the best known, several other graph classes are also under study.
For instance, \cite{YaYu2004}, \cite{RaSh2013}, and \cite{YuWa2003} consider the induced matching partition where each vertex set of the partition should induce a graph of maximum degree $1$.
\cite{ChChCh2004} focus on
 $\partition(\mathcal{H}_1, \ldots, \mathcal{H}_\ell)$ where $\ell$ is a fixed integer, $\mathcal{H}_1= \ldots =  \mathcal{H}_\ell = \mathcal{R}$, and $\mathcal{R}$ is either the class of every tree or the class of every forest.
These problems are called \textsc{Tree Arboricity} when $\mathcal{R}$ is the class of every tree and \textsc{Vertex Arboricity} when $\mathcal{R}$ is the class of every forest.
 They provide polynomial time algorithms for block-cactus graph, series-parallel graphs, and cographs.
\cite{YaYu2007} focus on planar graphs of diameter two.
As shown by \cite{JaJoKeStWu2019}, these problems have, in particular, applications in bioinformatics for constructing phylogenetic trees.

\bigskip
The treewidth of a graph is a structural parameter that measures the similarity of the graph to a forest, see Section~\ref{sec:prelim} for the formal definitions.
\cite{Co1990} shows that every problem that can be expressed in monadic second-order logic can be solved in \textsf{FPT}-time parameterized by treewidth, \textit{i.e.}, in time $f(\tw)\cdot n^{\mathcal{O}(1)}$ for some function $f$ where $n$ (resp. $\tw$) denotes the number of vertices (resp. the treewidth) of the input graph.
\cite{Ra2007} shows that if there is a monadic second-order logic formula that recognizes a graph class $\mathcal{H}$, then for any fixed integer $\ell$, $\partition(\mathcal{H}_1,\ldots, \mathcal{H}_\ell)$, with $\mathcal{H}_i = \mathcal{H}$ for all $i \in \intv{1,\ell}$, can be solved in polynomial time on graphs of bounded treewidth.
If~\cite{Co1990} and~\cite{Ra2007} provide powerful meta-algorithms, the claimed running times may be far from being optimal.
For instance, \cite{Co1990} provides an $2^{2^{\mathcal{O}(\tw)}}\cdot n^{\mathcal{O}(1)}$ algorithm for \textsc{Graph $3$-Coloring} when it is well known that an $2^{{\mathcal{O}(\tw)}}\cdot n^{\mathcal{O}(1)}$ algorithm exists, see for instance~\cite[Theorem 7.9]{CyFoKoLoMaPiPiSa2015}.

Recently, treewidth has found several applications in bioinformatics when dealing with the so-called display graphs~ as illustrated in the work of \cite{BrLa2006}, \cite{ScIeKeBr2014}, and \cite{fvBaPaSaSc2017}.
In order to solve the \textsc{DisplayGraph} problem, \cite{JaJoKeStWu2019} want to determine whether a given graph of bounded treewidth is a positive instance of $\partition(\mathcal{T},\mathcal{T})$ where $\mathcal{T}$ is the class of all trees.
Using Courcelle's theorem, they provide a $2^{2^{\mathcal{O}(\tw)}}\cdot n^{\mathcal{O}(1)}$ algorithm.
Using the approach developed in this paper, we obtain, as Corollary~\ref{coro:tt}, an algorithm running in time $2^{\mathcal{O}(\tw)}\cdot n^{\mathcal{O}(1)}$ for this same problem.

\bigskip

The dynamic programming core model is a formalism introduced by~\cite{fvBaFeJaMaOlPhRo2022}.
It was first introduced in order to construct meta-algorithms for what are called diverse problems.
It provides a formalism for dynamic programming algorithms that process a tree decomposition, once, in a bottom-up approach.
This kind  of algorithm is indeed widely used when working with treewidth.
Therefore the dynamic programming core model allows us to manipulate most of the known algorithms that process a tree decomposition.

In the present paper, we use the expressive power of this formalism and show that, with some enhancement, it can be used to easily provide algorithms, with good running times, that solve the graph partition problems parameterized by treewidth.
Roughly speaking, given $\ell$ graph classes $\mathcal{H}_1,\ldots, \mathcal{H}_\ell$, we show that solving $\graphpartition(\mathcal{H}_1,\ldots, \mathcal{H}_\ell,p)$ or $\partition(\mathcal{H}_1,\ldots, \mathcal{H}_\ell)$ is not much harder than recognizing each $\mathcal{H}_i$, using a dynamic programming tree-decomposition-based algorithm.
Moreover, we provide, in Theorem~\ref{th:graphpartition} and Theorem~\ref{th:partition}, the explicit running time needed for solving $\graphpartition(\mathcal{H}_1,\ldots, \mathcal{H}_\ell,p)$ and $\partition(\mathcal{H}_1,\ldots, \mathcal{H}_\ell)$ respectively as a function of the running time needed for recognizing each $\mathcal{H}_i$.
We provide, in Theorem~\ref{th:edgepartition}, similar result for the case where we want to partition the edge set of the graph, that is,
for the graph problem $\edgepartition(\mathcal{H}_1, \ldots, \mathcal{H}_\ell)$ that, given a graph $G$, consists in determining whether $E(G)$ can be partitioned into $\ell$ sets $S_1, \ldots, S_\ell$ such that $(V(G),S_i) \in \mathcal{H}_i$ for each $i \in \intv{1,\ell}$.

The main feature of our contribution is to present a meta-approach that provides easy-to-build and efficient algorithms for exotic problems.
Moreover the running time obtained for known problems are comparable to the best-known algorithms specifically designed for each given known problem.

\bigskip

In Section~\ref{sec:prelim}, we introduce the notations and useful definitions.
Section~\ref{sec:dpcm} is devoted to the definition of the dynamic programming core model together with some examples of dynamic cores.
The main results are given in Section~\ref{sec:main}.
In Section~\ref{sec:application}, we show how these results can be applied to reproduce known results and to provide unknown algorithms for exotic problems.
We provide a short conclusion in Section~\ref{sec:conclusion}.

\section{Preliminaries}
\label{sec:prelim}
We denote by $\mathbb{N}$ the set of nonnegative integers.
Given two integers $a$ and $b$ we define $\intv{a,b}$ the set of every integer $c$ such that $a \leq c \leq b$.
Let $G$ be a graph.
Let $\ell$ be an integer and $A = (m_1,\ldots, m_\ell)$ be a $\ell$-tuple. For each $i \in \intv{1,\ell}$, we use the notation $A.(i)$ to refer to the $i$-th coordinate of $A$, \textit{i.e.}, in this case to $m_i$.
Note that the coordinates are numbered from $1$ to $\ell$.
Given a set $S$, we denote by $2^S$ the set of every subset of $S$.
Given an alphabet $\Sigma$, we denote by $\mathcal{W}_\Sigma$ the set of every finite words over $\Sigma$.
We denote by $\Gamma$ the set of three special letters denoted  ``('', ``)'', and ``,''.

\bigskip

We use $V(G)$ and $E(G)$ to denote the vertex and edge sets, respectively, of the graph $G$.
Through out this paper, we assume that vertices are represented as elements of $\mathbb{N}$.
Given a set $S \subseteq V(G)$, we denote by $G[S]$ the subgraph of $G$ induced by $S$.
Given a set $S \subseteq E(G)$, we denote by $G[S]$ the graph $(V(G), S)$.
Given  two sets $S_1,S_2 \subseteq V(G)$, we denote by $\mathsf{edge}_G(S_1,S_2)$ the set of every edge of $G$ with one endpoint in $S_1$ and the other endpoint in $S_2$.
We also denote by $\mathcal{G}$ the set of every graph.
We denote by $\mathcal{I}$ the class of edgeless graphs.
Given an integer $p$, we denote by $\graphs_p$ the class of every graph with at most $p$ vertices.
We also denote by $\mathcal{T}$ the set of every tree
and by $\mathcal{F}$ the set of every forest.
Given a tree $T$ rooted at $r$, for each $t \in V(T)$, we denote by $\mathsf{child}(t)$ the set of every child of $t$ in $T$ and by $\mathsf{desc}(t)$  the set of every descendent of $t$ in $T$.

A \emph{rooted tree decomposition} of a graph $G$ 
is a tuple ${\cal D}=(T,\roots,{\cal X})$, where $T$ is a tree rooted at $\roots \in V(T)$
and ${\cal X}=\{X_{t}\mid t\in V(T)\}$ is a collection of subsets of $V(G)$
such that:
\begin{itemize}
\item $\bigcup_{t \in V(T)} X_t = V(G)$,
\item for every edge $\{u,v\} \in E$, there is a $t \in V(T)$ such that $\{u, v\} \subseteq X_t$, and
\item for each $\{x,y,z\} \subseteq V(T)$ such that $z$ lies on the unique path between $x$ and $y$ in $T$,  $X_x \cap X_y \subseteq X_z$.
\end{itemize}

The \emph{width} of a  tree decomposition 
${\cal D}=(T,r, {\cal X})$, denoted by $\w(\mathcal{D})$, is defined as \break $\max_{t \in V(T)} |X_t|-1$.
The \emph{treewidth} of a graph $G$, denoted by $\tw(G)$, is the smallest integer $w$ such that there exists a tree decomposition of $G$ of width at most $w$.
We also define $Y_t = X_t \cup \bigcup_{t' \in \mathsf{child}(t)}X_{t'}$ and
$Z_t = X_t \cup \bigcup_{t' \in \mathsf{desc}(t)}X_{t'}$.

It is well known, see for instance~\cite{Kl1994}, that given a rooted tree decomposition $\mathcal{D} = (T,r,\mathcal{X})$, we can, without loss of generality, assume that $X_r = \emptyset$, that, for each $t \in V(T)$, $t$ has at most $2$ children and that $|Y_t| \leq |X_t|+1$.
In the following we always assume that the rooted tree decompositions have these properties.

Given a graph $G$, a rooted tree decomposition $\mathcal{D} = (T,r,\mathcal{X})$ of $G$, and a set $S \subseteq V(G)$, we define
$\mathcal{D}[S]$ to be $(T,r,\{X_t \cap S \mid t \in V(T)\})$.
Note that $\mathcal{D}[S]$ is a rooted tree decomposition of $G[S]$.

\section{Dynamic programming core model}
\label{sec:dpcm}

In this section we define and use an improvement of the dynamic programming core model introduced by~\cite{fvBaFeJaMaOlPhRo2022}.
The main idea of this model is to formalize what is a dynamic programming algorithm based on a tree decomposition.
This will allow us to manipulate these algorithms in their generic form in order to construct meta-algorithms.

\begin{definition}[Dynamic Core]
  \label{definition:DynamicCore}
  A \emph{dynamic core} $\acore$ over an alphabet $\Sigma$ is a set of four functions:
  \begin{itemize}
  \item $\accepting_{\acore}: \graphs \to 2^{\wordset_{\Sigma}}$,
  \item $\process_{\acore,0}: \graphs \to 2^{\wordset_{\Sigma}}$,
  \item $\process_{\acore,1}: \graphs \times \graphs \to 2^{\wordset_{\Sigma}^2}$, and
  \item $\process_{\acore,2}: \graphs \times \graphs \times \graphs \to 2^{\wordset_{\Sigma}^3}$.
  \end{itemize}

\end{definition}

In the following, we always assume that the associated alphabet is implicitly given when a dynamic core is mentioned and
we denote by $\Sigma_{\acore}$ the alphabet associated to a dynamic core $\acore$.
Given a dynamic core $\acore$, a graph $G$ and a rooted tree decomposition $\mathcal{D}=(T,\roots,{\cal X})$ of $G$,
the \emph{data} of $\acore$ associated to $(G,\mathcal{D})$ are, for each $t \in V(T)$:
  \begin{align*}
    \accepting_{\acore,\agraph,\mathcal{D}} &= \accepting_{\acore}(G[X_r])\\
    \process_{\acore,\agraph,\mathcal{D}} (t) &=
\begin{cases}
  \process_{\acore,0} (G[X_t])&  \text{if $\mathsf{child}(t) = \emptyset$,}\\
 \process_{\acore,1} (G[X_t], G[X_{t'}])& \text{if $\mathsf{child}(t) = \{t'\}$, and}\\
 \process_{\acore,2} (G[X_t],G[X_{t'}], G[X_{t''}])& \text{if $\mathsf{child}(t) = \{t',t''\}.$}
\end{cases}
  \end{align*}
  We would like to highlight that Definition~\ref{definition:DynamicCore} is the main addition of this paper, concerning the definition of the dynamic programming core model, compared to~\cite{fvBaFeJaMaOlPhRo2022}.
  The functions $\process_{\acore,0}$, $\process_{\acore,1}$, and $\process_{\acore,2}$ can be viewed as the rules to update the table of the given dynamic programming algorithm, and so allow to easily and naturally construct a dynamic core from a dynamic programming algorithm that is based on a tree decomposition.
  Note that these rules are given independently of the tree decomposition as it is usual to do for a dynamic programming algorithm.
  The definitions of $\accepting_{\acore,\agraph,\mathcal{D}}$ and $\process_{\acore,\agraph,\mathcal{D}}$, if we consider $\Sigma = \{0,1\}$, are similar to the ones initially defined by~\cite{fvBaFeJaMaOlPhRo2022}.

  Note that, in this work, we assume that $X_r = \emptyset$ and so $\accepting_{\acore,\agraph,\mathcal{D}} = \accepting_\acore((\emptyset,\emptyset))$.
  We still keep the general notation, with $X_r$, to keep the setting as flexible as possible.
  Given a function $f : \mathcal{K} \to \mathcal{J}$ and an input $I \in \mathcal{K}$, we denote by $\tau(f,I)$ the time needed to compute $f(I)$.
  Given a dynamic core $\acore$, a graph $G$ and a rooted tree decomposition $\mathcal{D} = (T,r,\mathcal{X})$ we let:
\begin{align*}
  \tau_n(\acore,\agraph,\mathcal{D},t) &=
\begin{cases}
  \max_{A \subseteq Y_t} \tau(\process_{\acore,0},G[A \cap X_t])& \text{if $\mathsf{child}(t) = \emptyset$,}\\
  \max_{A \subseteq Y_t} \tau(\process_{\acore,1}, (G[A \cap X_t],G[A \cap X_{t'}]))& \text{if $\mathsf{child}(t) = \{t'\}$, and}\\
  \max_{A \subseteq Y_t} \tau(\process_{\acore,2}, (G[A \cap X_t],G[A \cap X_{t'}],G[A \cap X_{t''}]))& \text{if $\mathsf{child}(t) = \{t',t''\}.$}
  \end{cases}
\end{align*}
We also define $\tau_g(\acore,\agraph,\mathcal{D}) = \sum_{t \in V(T)}\tau_n(\acore,\agraph,\mathcal{D},t)$
and $\mathsf{size}(\acore,G,\mathcal{D},t) = \max_{A \subseteq Y_t} |\process_{\acore,G[A],\mathcal{D}[A]}(t)|$.
The $n$ of $\tau_n$ stands for node, and the $g$ of $\tau_g$ stands for global.
Note that, for each $t \in V(T)$,  $\process_{\acore,G[Y_t],\mathcal{D}[Y_t]}(t) = \process_{\acore,G,\mathcal{D}}(t)$.
We say that a dynamic core $\acore$ is \emph{polynomial}, if for each graph $G$, each rooted tree decomposition $\mathcal{D}= (T,r,\mathcal{X})$ of $G$, and each $t \in V(T)$, $\mathsf{size}(\acore,G,\mathcal{D},t)$ and $\tau_n(\acore,\agraph,\mathcal{D},t)$ are polynomial in $|V(G)| + |V(T)|$.

\begin{definition}
\label{definition:Witness}
Let $\acore$ be a dynamic core, $\agraph$ be a graph in $\graphs$, and  
$\mathcal{D} = (T,\roots, \mathcal{X})$ be a rooted tree decomposition of 
$\agraph$. A \emph{$(\acore,\agraph,\mathcal{D})$-witness} is a function $\alpha:V(T)\rightarrow \Sigma_{\acore}^*$
such that $  \alpha(r)\in \accepting_{\acore,\agraph,\mathcal{D}}$
and for each $t\in V(T)$,
\begin{align*}
\begin{drcases}
\text{if $\mathsf{child}(t) = \emptyset$,} & \alpha(t)\\
 \text{if $\mathsf{child}(t) = \{t'\}$,} &  (\alpha(t),\alpha(t'))\\
\text{if $\mathsf{child}(t) = \{t',t''\}$,} & (\alpha(t), \alpha(t'), \alpha(t''))
\end{drcases}
                                           & \in \process_{\acore,G,\mathcal{D}}(t)\\
\end{align*}
\end{definition}

The witness provided in Definition~\ref{definition:Witness} can be seen as a proof of the correctness of the algorithms we can produce using a given dynamic core.

\begin{definition}
  \label{definition:Solvability}
  Let $\mathcal{H}$ be a class of graphs.
  We say that a dynamic core $\acore$ \emph{solves} $\mathcal{H}$ if for each graph 
$\agraph\in \graphs$, and each rooted tree decomposition $\mathcal{D}$ of $\agraph$,
$\agraph\in \graphproblem$ if and only if a $(\acore,\agraph,\mathcal{D})$-witness exists. 
\end{definition}

As explained by~\cite{fvBaFeJaMaOlPhRo2022} and summarized in Theorem~\ref{theorem:DynamicSolvability}, a dynamic core can be seen as an algorithm producer.
Moreover the running time of the produced algorithms is directly connected to the definition of the associated dynamic core.

\begin{theorem}[\cite{fvBaFeJaMaOlPhRo2022}]
\label{theorem:DynamicSolvability}
Let $\graphproblem$ be a class of graphs and $\acore$ be a dynamic core that solves $\graphproblem$. 
Given a graph $\agraph\in \graphs$ and a rooted tree decomposition $\mathcal{D} = (T,r,\mathcal{X})$ of 
$\agraph$, one can decide whether $\agraph\in \graphproblem$ in time
$\mathcal{O}\left(\sum_{t \in V(T)}|\process_{\acore,\agraph,\mathcal{D}}(t)| + \timecomplexity_g(\acore,G,\mathcal{D})\right)$.
\end{theorem}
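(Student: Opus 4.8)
The plan is to produce the deciding algorithm directly from the combinatorial characterization in Definition~\ref{definition:Solvability}, by computing the data of $\acore$ and then searching for a $(\acore,\agraph,\mathcal{D})$-witness via a single bottom-up pass over $T$. First I would compute, for every $t \in V(T)$, the word set $\process_{\acore,\agraph,\mathcal{D}}(t)$, using the appropriate case ($\process_{\acore,0}$, $\process_{\acore,1}$, or $\process_{\acore,2}$) according to the number of children of $t$; by the assumed normalization each $t$ has at most two children, so these three cases are exhaustive. I would also compute $\accepting_{\acore,\agraph,\mathcal{D}} = \accepting(G[X_r])$. The cost of this precomputation is exactly $\timecomplexity(\acore,\agraph,\mathcal{D}) = \sum_{t \in V(T)} \timecomplexity(\acore,\agraph,\mathcal{D},t)$ by definition of the time measure, since computing $\process_{\acore,\agraph,\mathcal{D}}(t)$ is the relevant $\process$ evaluation with $A = Y_t$ (recall $\process_{\acore,G[Y_t],\mathcal{D}[Y_t]}(t) = \process_{\acore,G,\mathcal{D}}(t)$, so the unrestricted evaluation is among those bounded by $\timecomplexity(\acore,\agraph,\mathcal{D},t)$).

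Next I would determine, bottom-up, for each $t$ the set $R_t \subseteq \Sigma_{\acore}^*$ of words that are \emph{extendable to a partial witness on the subtree rooted at $t$}: a word $w$ belongs to $R_t$ exactly when there is an assignment $\alpha$ on $\mathsf{desc}(t) \cup \{t\}$ with $\alpha(t) = w$ that satisfies, at every node of that subtree, the membership condition of Definition~\ref{definition:Witness}. The recurrence is immediate from the three cases: if $t$ is a leaf, $R_t = \{w \mid w \in \process_{\acore,\agraph,\mathcal{D}}(t)\}$; if $t$ has one child $t'$, then $w \in R_t$ iff there is $w' \in R_{t'}$ with $(w,w') \in \process_{\acore,\agraph,\mathcal{D}}(t)$; and if $t$ has two children $t',t''$, then $w \in R_t$ iff there exist $w' \in R_{t'}$, $w'' \in R_{t''}$ with $(w,w',w'') \in \process_{\acore,\agraph,\mathcal{D}}(t)$. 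The answer is ``yes'' precisely when $R_r \cap \accepting_{\acore,\agraph,\mathcal{D}} \neq \emptyset$; correctness follows directly by induction on $T$ from the definition of a witness, and the equivalence with $\agraph \in \graphproblem$ is then given by Definition~\ref{definition:Solvability}.

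For the running time I would argue that each transition at a node $t$ is computed by a single scan over the tuples stored in $\process_{\acore,\agraph,\mathcal{D}}(t)$, checking for each tuple whether its child-coordinates lie in the already-computed sets $R_{t'}$ (and $R_{t''}$). This requires at most $\Ocal(|\process_{\acore,\agraph,\mathcal{D}}(t)|)$ membership tests against the children's sets, plus the one-time cost of building the $\process$ tables. Summing over $t \in V(T)$ yields the bound $\Ocal\!\left(\sum_{t \in V(T)} |\process_{\acore,\agraph,\mathcal{D}}(t)| + \timecomplexity(\acore,\agraph,\mathcal{D})\right)$ claimed in the statement.

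The main obstacle is the accounting of the membership tests: a naive implementation of ``is the child-coordinate of this tuple in $R_{t'}$?'' could incur a multiplicative blowup rather than the additive bound promised. To keep it linear in $\sum_t |\process_{\acore,\agraph,\mathcal{D}}(t)|$ I would organize each $R_{t'}$ in a suitable lookup structure (a trie over $\Sigma_{\acore}^*$, or a hash table, charging the preprocessing to the terms already counted) so that each test runs in time proportional to the length of the queried word, and then observe that the total length of all queried words is itself controlled by the sizes of the $\process$ tables that contain them as coordinates. This is precisely the bookkeeping already carried out in~\cite{arBaFeJaOlRo2019}, so I would either invoke their analysis or reproduce the charging argument to confirm that the dominant terms are exactly the table-size sum and $\timecomplexity(\acore,\agraph,\mathcal{D})$.
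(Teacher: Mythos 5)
Your proof is correct: the bottom-up computation of the sets $R_t$ of words extendable to partial witnesses, combined with the observation that taking $A = Y_t$ in the definition of $\timecomplexity(\acore,G,\mathcal{D},t)$ covers the cost of building each table $\process_{\acore,G,\mathcal{D}}(t)$, is exactly the canonical argument this theorem rests on. Note that the paper itself offers no proof of this statement---it is imported verbatim from~\cite{arBaFeJaOlRo2019}---so your reconstruction, including the careful charging of membership tests to the table sizes, is precisely the intended argument and matches how the dynamic core model is meant to be turned into an algorithm.
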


\subsection{Some examples of dynamic core}

In this section we provide a few examples of dynamic cores.
We start by a trivial dynamic core that allows us to produce an algorithm that recognizes that a graph has no edge.
This dynamic core solves $\mathcal{I}$, the class of graphs with no edges.

\begin{observation}
  \label{obs:indep}
  $\mathcal{I}$ can be solved by a polynomial dynamic core $\acore$.
\end{observation}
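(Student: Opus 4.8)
The plan is to exhibit a concrete dynamic core $\acore$ over a small alphabet and verify that it satisfies Definition~\ref{definition:Solvability} for $\mathcal{H} = \mathcal{I}$, while also checking that it is polynomial in the sense given just before the statement. Since recognizing edgelessness requires essentially no information to be propagated up the tree decomposition --- a graph is in $\mathcal{I}$ if and only if none of its bags induces an edge --- I expect a single-letter alphabet to suffice, say $\Sigma_{\acore} = \{a\}$, with witnesses carrying no real data. The idea is that the $\process$ functions act as local consistency checks: at each node $t$ they accept the trivial word if and only if the graph $G[X_t]$ presented to them has no edge, and they reject (return the empty set of allowed tuples) otherwise.

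Concretely, I would define $\process_{\acore,0}(G[X_t])$ to return $\{a\}$ if $E(G[X_t]) = \emptyset$ and $\emptyset$ otherwise; define $\process_{\acore,1}(G[X_t],G[X_{t'}])$ to return $\{(a,a)\}$ if $G[X_t]$ is edgeless and $\emptyset$ otherwise; and define $\process_{\acore,2}(G[X_t],G[X_{t'}],G[X_{t''}])$ to return $\{(a,a,a)\}$ if $G[X_t]$ is edgeless and $\emptyset$ otherwise. For the acceptance function, since we may assume $X_r = \emptyset$, I set $\accepting_{\acore}(G[X_r]) = \{a\}$, so the root label is never an obstruction. The intended $(\acore,G,\mathcal{D})$-witness is simply the constant function $\alpha(t) = a$ for all $t \in V(T)$.

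The correctness argument then splits into two directions. For the forward direction, if $G \in \mathcal{I}$ then $G$ has no edges, so every induced subgraph $G[X_t]$ is edgeless; hence every $\process$ function returns its nonempty singleton, the constant map $\alpha \equiv a$ satisfies all the local membership conditions, and $\alpha(r) = a \in \accepting_{\acore,G,\mathcal{D}}$, so a witness exists. For the converse, suppose a witness $\alpha$ exists; I would argue that for every $t$ the tuple condition at $t$ forces $\process$ to be nonempty at $t$, which by construction forces $G[X_t]$ to be edgeless. Since the second axiom of a tree decomposition guarantees that every edge $\{u,v\} \in E(G)$ is contained in some bag $X_t$, an edge anywhere in $G$ would make that bag's induced subgraph nonedgeless and thus kill the witness; hence no edge can exist and $G \in \mathcal{I}$.

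For the polynomiality claim, I would note that each $\process$ output has size at most one word of length one, so $\mathsf{size}(\acore,G,\mathcal{D},t) \le 1$, which is trivially polynomial, and that deciding whether $G[A \cap X_t]$ has an edge takes time polynomial in $|X_t|$ (at most quadratic in the bag size, hence polynomial), so $\timecomplexity(\acore,G,\mathcal{D},t)$ is polynomial as required. I do not anticipate a genuine obstacle here; the only point needing a little care is to make sure the $\process$ functions are tested on the restricted graphs $G[A \cap X_t]$ appearing in the definitions of $\timecomplexity$ and $\mathsf{size}$ and that edgelessness behaves monotonically under taking induced subgraphs, so that the edge-detection check is well defined and cheap for every $A \subseteq Y_t$.
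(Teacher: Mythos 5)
Your construction is essentially identical to the paper's: a one-letter alphabet (your $a$ is the paper's $\top$), $\process$ functions that return the trivial singleton exactly when the current bag $G[X_t]$ is edgeless and $\emptyset$ otherwise, an always-accepting root function, and correctness via the tree-decomposition axiom that every edge lies in some bag. The polynomiality check (singleton-size tables, bag-local edge detection) also matches the paper's running-time remark, so the proposal is correct and takes the same route.
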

\begin{proof}
  We define $\acore$ such that for each $G,G',G'' \in \graphs$,
  \begin{align*}
    \accepting_{\acore}(G) & = \{\top\} \\
    \process_{\acore,0}(G) & = \{\top \mid E(G) = \emptyset\}\\
    \process_{\acore,1}(G,G') &= \{(\top,\top) \mid E(G) = \emptyset\}\\
    \process_{\acore,2}(G,G',G'') &= \{(\top,\top,\top) \mid E(G) = \emptyset\}\\
  \end{align*}

  In this case, $\Sigma_\acore = \{\top\}$ where $\top$ represents the fact that the already explored part does not contain any edge.
  Given $G \in \mathcal{I}$ and $\mathcal{D} = (T,r,\mathcal{X})$ a rooted tree decomposition of $G$, a $(\acore,\agraph,\mathcal{D})$-witness $\alpha$ is such that, for each $t \in V(T)$,
$\alpha(t) = \top$.
  
  For the running time, note that given a graph $G$ and a rooted tree decomposition $\mathcal{D} = (T,r,\mathcal{X})$ of $G$, then for each $t \in V(T)$, $|\process_{\acore,G,\mathcal{D}}(t)| \leq 1$ and $\timecomplexity_n(\acore,G,\mathcal{D},t) = \mathcal{O}(|X_t|)$.
\end{proof}

We also provide a slightly more involved dynamic core that solves $\graphs_p$ for some integer $p$, \textit{i.e.}, the class of graphs with at most $p$ vertices.

\begin{observation}
  \label{obs:boudedgraphs}
  Let $p$ be an integer.
  $\graphs_p$ can be solved by a polynomial dynamic core $\acore$.
\end{observation}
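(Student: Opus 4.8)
The plan is to construct a polynomial dynamic core $\acore$ that tracks exactly the information needed to count vertices of $G$ as they are introduced by the tree decomposition, capping the count at $p+1$ so that the data stays polynomially bounded. The key observation is that membership in $\mathcal{B}_p$ depends only on the total number of vertices, so the letters we store at each bag should encode a running tally of how many vertices have appeared in the already-explored part of the decomposition, i.e., in $Z_t$. Concretely, I would take $\Sigma_{\acore}$ to include the integers in $\intv{0,p+1}$ (or words encoding them), where the symbol stored at node $t$ represents $\min(|Z_t|, p+1)$; the value $p+1$ acts as an absorbing ``overflow'' state recording that the graph already has more than $p$ vertices.

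First I would define the four functions. For a leaf $t$ with no child, $\process_{\acore,0}(G[X_t])$ returns the single word $\min(|V(G[X_t])|, p+1)$, since at a leaf the explored part is exactly $X_t$. For a node $t$ with one child $t'$, $\process_{\acore,1}(G[X_t],G[X_{t'}])$ should relate the count $a$ stored at $t$ to the count $b$ stored at $t'$: the new vertices contributed at $t$ are those in $X_t \setminus X_{t'}$, so the pair $(a,b)$ is admissible exactly when $a = \min(b + |X_t \setminus X_{t'}|, p+1)$, and I would output the set of all such consistent pairs. The join case $\process_{\acore,2}$ is analogous but must avoid double-counting the vertices shared by the two children; since the intersection of the two subtrees' vertex sets lies inside $X_t$ by the properties of a tree decomposition, the new count is obtained by adding $|X_{t'} \setminus X_t|$-type corrections, and I would write the admissibility condition for triples $(a,b,c)$ accordingly, again capping at $p+1$. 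Finally, $\accepting_{\acore}(G[X_r])$ should accept precisely when the total count recorded at the root does not exceed $p$; since $X_r = \emptyset$ we cannot read the count directly from $G[X_r]$, so I would instead have the root's accepting condition cooperate with the stored symbol by making $\accepting_{\acore}$ return all symbols in $\intv{0,p}$, thereby rejecting the overflow symbol $p+1$.

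To verify that $\acore$ solves $\mathcal{B}_p$ in the sense of Definition~\ref{definition:Solvability}, I would argue that a $(\acore,\agraph,\mathcal{D})$-witness $\alpha$ exists if and only if $|V(G)| \le p$. The forward direction follows by showing, by induction on the subtree rooted at $t$, that any witness must satisfy $\alpha(t) = \min(|Z_t|, p+1)$, so the root symbol equals $\min(|V(G)|, p+1)$; the acceptance condition then forces this to be at most $p$. The converse direction constructs the witness by setting $\alpha(t) = \min(|Z_t|, p+1)$ everywhere and checking that every local constraint is satisfied, which is exactly how the $\process$ functions were defined.

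For the complexity bound establishing that $\acore$ is polynomial, I would note that each symbol is a single integer in $\intv{0,p+1}$, so the output sets $\process_{\acore,G,\mathcal{D}}(t)$ have size $\Ocal(p^2)$ in the one-child case and $\Ocal(p^3)$ in the join case, hence $\size(\acore,G,\mathcal{D},t)$ is polynomial; likewise computing each consistency relation requires only counting $|X_t \setminus X_{t'}|$ and simple arithmetic, so $\timecomplexity(\acore,G,\mathcal{D},t)$ is polynomial. The main obstacle I anticipate is getting the join case exactly right: one must be careful that the inclusion–exclusion for $|Z_t|$ across two children correctly uses the tree-decomposition property $Z_{t'} \cap Z_{t''} \subseteq X_t$ so that no vertex is counted twice, and that the capping at $p+1$ is applied consistently so the induction in the correctness proof goes through without off-by-one errors.
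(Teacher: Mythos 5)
Your overall architecture (a running vertex count, an overflow cap at $p+1$, acceptance of exactly the symbols in $\intv{0,p}$ at the root) is sound, and your leaf and one-child rules are correct: those transitions only \emph{add} nonnegative quantities, and capping commutes with such additions, i.e., $\min(\min(x,p+1)+\delta,\,p+1)=\min(x+\delta,\,p+1)$ for $\delta\ge 0$. The genuine gap is the join rule, which you leave unspecified (``$|X_{t'}\setminus X_t|$-type corrections'') and yourself flag as the main obstacle. It is not a matter of off-by-one care: any rule of the natural form
\[
a \;=\; \min\bigl(b+c-|X_{t'}\cap X_{t''}|+|X_t\setminus(X_{t'}\cup X_{t''})|,\;p+1\bigr)
\]
fails to maintain your invariant $\alpha(t)=\min(|Z_t|,p+1)$, because capping does \emph{not} commute with subtraction. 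Concretely, take $p=4$ and a join node $t$ with $X_t=X_{t'}=X_{t''}=B$, $|B|=10$, where each child's subtree has additionally introduced and forgotten ten private vertices. The forced child values are $b=c=\min(20,5)=5$, and the rule above gives $a=\min(5+5-10+0,\,5)=0$; as the vertices of $B$ are forgotten above $t$ the count stays $0$, the root accepts, and a witness exists for a $30$-vertex graph although $p=4$. This decomposition satisfies all the paper's standing assumptions ($X_r=\emptyset$, at most two children, $|Y_t|\le|X_t|+1$), so your core would falsely accept no-instances whose bags are larger than $p+1$ --- exactly the direction of Definition~\ref{definition:Solvability} that your induction was meant to secure (note the failure can never show up on yes-instances, where all bags have at most $p$ vertices, which makes it easy to miss).

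The repair is small, but it is the actual crux of the observation. Either make the overflow state explicitly absorbing (if $b=p+1$ or $c=p+1$, admit only $a=p+1$; otherwise $b$ and $c$ are exact counts and inclusion--exclusion followed by capping is valid), or, as the paper does, store the number of \emph{forgotten} vertices $|Z_t\setminus X_t|$ rather than $|Z_t|$: leaves store $0$, a one-child step adds $|X_{t'}\setminus X_t|$, and a join admits exactly the triples with $q=q'+q''+|(X_{t'}\cup X_{t''})\setminus X_t|$. The point of this bookkeeping choice is that the two children's forgotten sets are automatically disjoint (any vertex of $Z_{t'}\cap Z_{t''}$ lies in $X_{t'}\cap X_{t''}$ by the tree-decomposition axioms), so no subtraction ever occurs, every transition is monotone, and bounding the stored value (the paper simply requires $q\le p$ in every transition instead of capping) interacts correctly with the induction; since $X_r=\emptyset$, the root's value is then exactly $|V(G)|$. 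With either repair, your correctness argument and your polynomiality bounds go through as written.
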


\begin{proof}
  We define $\acore$ such that for each $G,G',G'' \in \graphs$,
  \begin{align*}
    \accepting_{\acore}(G) & = \{q \mid q \in \intv{0,p}\} \\
    \process_{\acore,0}(G) & = \{0\}\\
    \process_{\acore,1}(G,G') &= \{(q,q') \mid q' \geq 0,\ q\leq p, \text{ and } q = q' + |V(G') \setminus V(G)| \}\\
    \process_{\acore,2}(G,G',G'') &= \{(q,q',q'') \mid q', q'' \geq 0,\ q\leq p, \text{ and }q = q' + q''+ |(V(G') \cup V(G'')) \setminus V(G)|\}\\
  \end{align*}
  It is now an easy task to show that $\acore$ solves $\graphs_p$.
  In this case, given a graph $\agraph \in \graphs_p$ and a rooted tree decomposition $\mathcal{D} = (T,r,\mathcal{X})$, a possible $(\acore,\agraph,\mathcal{D})$-witness $\alpha$ is such that, for each $t \in V(T)$,   $\alpha(t) = |Z_t \setminus X_t|$.
  Simply note that we only count the number of vertices in the part that has already been completely explored and forgotten and that $X_r = \emptyset$.

  For the running time, note that given a graph $G$ and a rooted tree decomposition $\mathcal{D} = (T,r,\mathcal{X})$ of $G$, then, for each $t \in V(T)$, $|\process_{\acore,G,\mathcal{D}}(t)| \leq (p+1)^2$ and $\timecomplexity_n(\acore,G,\mathcal{D},t) = \mathcal{O}(|\process_{\acore,G,\mathcal{D}}(t)|)$.
  The lemma follows.
\end{proof}

Observations~\ref{obs:indep} and~\ref{obs:boudedgraphs} show how to construct a dynamic core for trivially solvable problems.
We mostly provided these observations as pedagogical examples.
One can then get the intuition that most of the dynamic programming algorithms parameterized by treewidth can be translated into dynamic cores.
Indeed, such an algorithm creates a dynamic programming table for each node of the tree decomposition.
Theses tables are updated depending of dynamic programming tables of the children of the node taken into consideration.
Transposing how these updates are done into a consistent definition of $\process_{\acore,0}$, $\process_{\acore,1}$, or $\process_{\acore,2}$, depending on the number of children of the node taken in consideration, will then provide a dynamic core for the problem.

The rank-based approach, developed by~\cite{BoCyKrNe2015}, provides, in particular, a deterministic algorithm that solves \textsc{Feedback Vertex Set} in time $2^{\mathcal{O}(\tw)}\cdot n^{\mathcal{O}(1)}$, where $n$ (resp. $\tw$) stands for the size (resp. treewidth) of the input graph.
From this algorithm, one can easily obtain a dynamic core for recognizing if a graph is a tree.
We omit the proof of it as it requires to reintroduce several tools presented by~\cite{BoCyKrNe2015} that are out of the scope of this paper.

  \begin{observation}
  \label{obs:tree}
  The class $\mathcal{T}$ of trees can be solved by a dynamic core $\acore$  such that, for each graph $G$, each rooted tree decomposition $\mathcal{D} = (T,r,\mathcal{X})$ of $G$, and each $t \in V(T)$:
    \begin{itemize}
  \item $|\process_{\acore,\agraph,\mathcal{D}}(t)| = 2^{\mathcal{O}(\w(\mathcal{D}))}\cdot n^{\mathcal{O}(1)}$
  \item $\timecomplexity_g(\acore,G,\mathcal{D}) = 2^{\mathcal{O}(\w(\mathcal{D}))}\cdot |V(T)| \cdot n^{\mathcal{O}(1)} $
  \end{itemize}
\end{observation}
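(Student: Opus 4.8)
The plan is to construct a dynamic core $\acore$ for $\mathcal{T}$ by reading off the state transitions of the rank-based Feedback Vertex Set algorithm of~\cite{BoCyKrNe2015}, specialized to the acyclicity constraint, and then adding the global requirement that the recognized forest be connected and have exactly $n-1$ edges. First I would recall that a graph is a tree if and only if it is acyclic and connected. The rank-based approach maintains, for each bag, a weighted set of partial solutions indexed by how the vertices of $X_t$ are partitioned into connected components of the forest built so far on $Z_t$, together with a count of the number of edges selected. The key observation enabling the $2^{\mathcal{O}(\w)}$ bound is that instead of storing all partitions of $X_t$ explicitly (which would give a Bell-number blow-up), one stores a \emph{representative} set of partitions of size $2^{\mathcal{O}(\w(\mathcal{D}))}$ that preserves, under the matroid/partition-connectivity join operation, the existence of a globally connected acyclic completion.

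The main steps, in order, would be: (i) fix the alphabet $\Sigma_{\acore}$ so that a word encodes a partition of a subset of the current bag $X_t$ into blocks (representing the trace of connected components) together with an integer edge-count, using the auxiliary letters $\Gamma$ for the bracketing and comma structure; (ii) define $\process_{\acore,0}(G[X_t])$ to emit the trivial state on a leaf bag, where each vertex is its own singleton block and the edge count reflects the edges inside $G[X_t]$; (iii) define $\process_{\acore,1}$ and $\process_{\acore,2}$ to perform, respectively, the introduce/forget updates and the join of two children's partition-states, in each case rejecting any transition that would create a cycle (i.e., that would merge two vertices already in the same block or close a cycle through a forgotten component), and updating the edge count additively; (iv) define $\accepting_{\acore}(G[X_r])$ to accept exactly when the accumulated state at the root $X_r = \emptyset$ records a single connected component and an edge count equal to $|V(G)|-1$. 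By Definition~\ref{definition:Solvability}, the correctness claim is that a $(\acore,G,\mathcal{D})$-witness exists if and only if $G \in \mathcal{T}$, which follows from the soundness and completeness of the rank-based representative sets: any witness assembles a globally acyclic connected spanning subgraph, and conversely any tree yields a consistent assignment $\alpha$.

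For the running-time bounds, I would invoke directly the size guarantee of the rank-based method: the representative partition-set stored at each node has size $2^{\mathcal{O}(\w(\mathcal{D}))}$, and since each word additionally carries an edge count in $\intv{0,n-1}$ and encodes a partition of at most $\w(\mathcal{D})+1$ vertices, we get $|\process_{\acore,G,\mathcal{D}}(t)| = 2^{\mathcal{O}(\w(\mathcal{D}))}\cdot n^{\mathcal{O}(1)}$. Summing the per-node join cost, dominated by the matrix operation that computes the representative set, over all $|V(T)|$ nodes yields $\timecomplexity(\acore,G,\mathcal{D}) = 2^{\mathcal{O}(\w(\mathcal{D}))}\cdot |V(T)| \cdot n^{\mathcal{O}(1)}$, matching the stated claim.

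The hard part will be faithfully translating the rank-based representative-set machinery, which is normally phrased over a Gaussian-elimination-style reduction of a partition matrix, into the strict syntactic format of a dynamic core whose $\process$ functions output finite word-sets over $\Sigma_{\acore}$. In particular, the \emph{reduce} step of~\cite{BoCyKrNe2015} that shrinks the stored set to size $2^{\mathcal{O}(\w)}$ is not a purely local bag operation in the naive sense, so I would need to argue that it can be folded into the definition of $\process_{\acore,1}$ and $\process_{\acore,2}$ without breaking the local witness-consistency condition of Definition~\ref{definition:Witness}. This is precisely the technical overhead that~\cite{BoCyKrNe2015} already handles, which is why the excerpt elects to omit the full construction; the proof proposal amounts to certifying that their invariants survive the reformulation into the dynamic core model, and then reading the two bounds off their complexity analysis.
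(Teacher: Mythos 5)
Your proposal follows exactly the route the paper itself takes: the paper gives no actual proof of this observation, stating only that a dynamic core for $\mathcal{T}$ can be ``easily'' derived from the rank-based \textsc{Feedback Vertex Set} algorithm of~\cite{BoCyKrNe2015}, and omitting the construction because it would require reintroducing the machinery of that paper. Your sketch (partition-of-the-bag states with edge counts, representative families for the $2^{\mathcal{O}(\w(\mathcal{D}))}$ bound, acceptance at the empty root bag) is therefore at least as detailed as the paper's own treatment, and the difficulty you flag --- that the reduce step of~\cite{BoCyKrNe2015} acts on the accumulated table rather than locally on the bag graphs, so it is not obviously expressible through $\process_{\acore,1}$ and $\process_{\acore,2}$, whose outputs may depend only on $G[X_t]$ and the children's bags --- is genuine, but it is left equally unaddressed by the paper.
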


Note in particular that the dynamic core provided in Observation~\ref{obs:tree} is not polynomial.

\subsection{Union and intersection of dynamic core}

In this section we provide some simple combinations of dynamic cores.
More precisely, we show how to take the union and the intersection of two dynamic cores.
This will allow us, in Theorems~\ref{th:graphpartition}, \ref{th:partition}, and \ref{th:edgepartition}, to consider the union (resp. intersection) of recognizable classes without having to prove each time that the considered union (resp. intersection) is recognizable.
Let $\mathcal{H}_1$ and $\mathcal{H}_2$ be two graph classes and let $\acore_1$ (resp. $\acore_2$) be a dynamic core that solves $\mathcal{H}_1$ (resp. $\mathcal{H}_2$).
We would like to stress that,
in order to solve
$\mathcal{H}_1 \cup \mathcal{H}_2$ or $\mathcal{H}_1 \cap \mathcal{H}_2$, the naive procedure, consisting of using $\acore_1$ and then using $\acore_2$, would be more efficient with regard to the running time but will not produce a dynamic core.
As the main theorems of the paper, namely Theorems~\ref{th:graphpartition}, \ref{th:partition}, and \ref{th:edgepartition}, rely on the knowledge of a dynamic core, this naive procedure will not suit.

\begin{lemma}
  \label{th:intersection}
  Let $\ell$ be an integer,
  let $\mathcal{H}_1,\ldots,\mathcal{H}_\ell$ be graph classes and let,
  for each $i \in \intv{1,\ell}$, 
  $\acore_i$ be a dynamic core that solves $\mathcal{H}_i$.
  There exists a dynamic core $\acore$ that solves $\mathcal{H} = \bigcap_{i \in \intv{1,\ell}} \mathcal{H}_i$ such that, for each graph $G$, each rooted tree decomposition $\mathcal{D} = (T,r,\mathcal{X})$ of $G$, and each $t \in V(T)$:
  \begin{itemize}
  \item $|\process_{\acore,\agraph,\mathcal{D}}(t)| = \prod_{i \in \intv{1,\ell}} |\process_{\acore_i,G,\mathcal{D}}(t)|$
  \item $\timecomplexity_g(\acore,G,\mathcal{D}) = \sum_{i \in \intv{1,\ell}}\timecomplexity_g(\acore_i,G,\mathcal{D}) + \Ocal\left(\sum_{t \in V(T)} |\process_{\acore,G,\mathcal{D}}(t)|\right)$
  \end{itemize}
\end{lemma}
\begin{proof}
  We define $\acore$ such that for each $G,G',G'' \in \graphs$,
  \begin{align*}
  \accepting_{\acore}(G) &= \{(m_1,\ldots, m_\ell) \mid \forall i \in \intv{1,\ell}, m_i \in \accepting_{\acore_i}(G)\ \},\\
  \process_{\acore,0}(G) &= \{(m_1,\ldots, m_\ell) \mid \forall i \in \intv{1,\ell},  m_i \in \process_{\acore_i,0}(G)\ \},\\
    \process_{\acore,1}(G,G') &= \{((m_1,\ldots, m_\ell),(m'_1,\ldots, m'_\ell)) \mid\\
         & ~~~~~~~~~~~~~~~~~~~~~~~~~~ \forall i  \in \intv{1,\ell}, (m_i,m'_i) \in \process_{\acore_i,1}(G,G')\ \},\\
           \process_{\acore,2}(G,G',G'') &= \{((m_1,\ldots, m_\ell),(m'_1,\ldots, m'_\ell),(m''_1,\ldots, m''_\ell)) \mid \\
                         & ~~~~~~~~~~~~~~~~~~~~~~~~~~\forall i \in \intv{1,\ell}, (m_i,m'_i,m''_i) \in \process_{\acore_i,2}(G,G',G'')\ \}.\\
  \end{align*}

  We now prove that $\acore$ solves $\mathcal{H}$.
  First note that $\Sigma_{\acore} = \Gamma \cup \bigcup_{i \in \intv{1,\ell}}\Sigma_{\acore_i}$.
  Let $G$ be a graph and $\mathcal{D}  = (T,r,\mathcal{X})$ be a rooted tree decomposition of $G$.

  Assume first that $G \in \mathcal{H}$, then by definition of $\mathcal{H}$, for each $i \in \intv{1,\ell}$,  $G \in \mathcal{H}_i$.
  Let, for each $i \in \intv{1,\ell}$, $\alpha_i: V(T) \to \word_{\acore_i}$ be a $(\acore_i,G,\mathcal{D})$-witness.
  Note that it exists as $\acore_i$ solves $\mathcal{H}_i$ and $G \in \mathcal{H}_i$.
  We define $\alpha: V(T) \to \word_\acore$ such that for each $t \in V(T)$, $\alpha(t) = (\alpha_1(t), \ldots, \alpha_\ell(t))$.
  By construction of $\acore$, $\alpha$ is a $(\acore,G,\mathcal{D})$-witness.

  Assume now that there exists a  $(\acore,G,\mathcal{D})$-witness $\alpha: V(T) \to \word_\acore$.
  Then by definition of $\acore$, for each $t \in V(T)$, $\alpha(t)$ is a $\ell$-tuple.
  Let define $\ell$ functions $\alpha_i:V(T) \to \word_\acore$, $i \in \intv{1,\ell}$, such that for each $i \in \intv{1,\ell}$ and each $t \in V(T)$, $\alpha_i(t) = \alpha(t).(i)$.
  Then, by definition of $\acore$, for each $i \in \intv{1,\ell}$, $\alpha_i$ is a $(\acore_i,G,\mathcal{D})$-witness and so $G \in \mathcal{H}_i$.
  Thus $G \in \mathcal{H}$.

  Let us now analyze the needed running time for this algorithm.
  Let $G$ be a graph and $\mathcal{D} = (T,r,\mathcal{X})$ be a rooted tree decomposition of $G$.
  For each $t \in V(T)$, we have, by definition, $|\process_{\acore,G,\mathcal{D}}(t)| = \prod_{i \in \intv{1,\ell}} |\process_{\acore_i,G,\mathcal{D}}(t)|$.
  In order to construct the data of $\acore$ associated to $(G,\mathcal{D})$, we need first to construct the data of $\acore_i$ associated to $(G,\mathcal{D})$ for each $i \in \intv{1,\ell}$ and then to combine them.
  Thus, we have
  \begin{displaymath}
    \timecomplexity_g(\acore,G,\mathcal{D}) =\sum_{i \in \intv{1,\ell}}\timecomplexity_g(\acore_i,G,\mathcal{D}) +  \Ocal\left(\sum_{t \in V(T)} |\process_{\acore,G,\mathcal{D}}(t)|\right).
\end{displaymath}
\end{proof}

\begin{lemma}
  \label{th:union}
  Let $\ell$ be an integer,
  let $\mathcal{H}_1,\ldots,\mathcal{H}_\ell$ be graph classes and let,
  for each $i \in \intv{1,\ell}$, 
  $\acore_i$ be a dynamic core that solves $\mathcal{H}_i$.
  There exists a dynamic core $\acore$ that solves $\mathcal{H} = \bigcup_{i \in \intv{1,\ell}} \mathcal{H}_i$ such that, for each graph $G$, each rooted tree decomposition $\mathcal{D}= (T,r,\mathcal{X})$ of $G$, and each $t \in V(T)$:
  \begin{itemize}
  \item $|\process_{\acore,\agraph,\mathcal{D}}(t)| = \sum_{i \in \intv{1,\ell}} |\process_{\acore_i,G,\mathcal{D}}(t)|$
  \item $\timecomplexity_g(\acore,G,\mathcal{D}) = \sum_{i \in \intv{1,\ell}}\timecomplexity_g(\acore_i,G,\mathcal{D}) + \Ocal\left(\sum_{t \in V(T)} |\process_{\acore,G,\mathcal{D}}(t)|\right)$
  \end{itemize}
\end{lemma}
\begin{proof}
  Let $\bot$ be an unused letter. We define $\acore$ such that for each $G,G',G'' \in \graphs$,
  \begin{align*}
  \accepting_{\acore}(G) &= \{(m_1,\ldots, m_\ell) \mid \exists i \in \intv{1,\ell}, m_i \in \accepting_{\acore_i}(G)\ \},\\
    \process_{\acore,0}(G) &= \{(m_1,\ldots, m_\ell) \mid\\
                           & ~~~~~~~~~~~~~~~~~~ \exists i \in \intv{1,\ell},  m_i \in \process_{\acore_i,0}(G)\\
                         & ~~~~~~~~~~~~~~~~~~ \forall j \in \intv{1,\ell} \setminus \{i\}, m_j = \bot \ \},\\
    \process_{\acore,1}(G,G') &= \{((m_1,\ldots, m_\ell),(m'_1,\ldots, m'_\ell)) \mid\\
     & ~~~~~~~~~~~~~~~~~~\exists i \in \intv{1,\ell}, (m_i,m'_i) \in \process_{\acore_i,1}(G,G')\\
                         & ~~~~~~~~~~~~~~~~~~ \forall j \in \intv{1,\ell} \setminus \{i\}, (m_j,m'_j) = (\bot,\bot) \ \}, and\\
    \process_{\acore,2}(G,G',G'') &= \{((m_1,\ldots, m_\ell),(m'_1,\ldots, m'_\ell),(m''_1,\ldots, m''_\ell)) \mid \\
     & ~~~~~~~~~~~~~~~~~~\forall i \in \intv{1,\ell}, (m_i,m'_i,m''_i) \in \process_{\acore_i,2}(G,G',G'')\\
                         & ~~~~~~~~~~~~~~~~~~ \forall j \in \intv{1,\ell} \setminus \{i\}, (m_j,m'_j,m''_j) = (\bot,\bot,\bot) \ \}.\\
  \end{align*}
  We prove, using the same kind of argumentation as for Theorem~\ref{th:intersection}, that $\acore$ solves $\mathcal{H}$.
  Note that in this case, the letter $\bot$ is used, in particular, for each coordinate $i$ such that $G \not \in \mathcal{H}_i$.
\end{proof}

\section{Main theorem}
\label{sec:main}
In this section we show,
given two integers $\ell$ and $p$, $\ell$ graph classes $\mathcal{H}_1,\ldots,\mathcal{H}_\ell$ and $\ell$ dynamic cores $\acore_1,\ldots,\acore_\ell$ such that, for each $i \in \intv{1,\ell}$, $\acore_i$ solves $\mathcal{H}_i$,
how to construct dynamic cores that solve
$\graphpartition(\mathcal{H}_1, \ldots, \mathcal{H}_\ell,p)$,
$\partition(\mathcal{H}_1, \ldots, \mathcal{H}_\ell)$, and
$\edgepartition(\mathcal{H}_1, \ldots, \mathcal{H}_\ell)$.

\bigskip

We start by the graph partition problem, that is the most involved case.

\begin{theorem}
  \label{th:graphpartition}
  Let $\ell$ and $p$ be two integers,
  let $\mathcal{H}_1,\ldots,\mathcal{H}_\ell$ be graph classes and let,
  for each $i \in \intv{1,\ell}$, 
  $\acore_i$ be a dynamic core that solves $\mathcal{H}_i$.
  There exists a dynamic core $\acore$ that solves $\mathcal{H} = \graphpartition(\mathcal{H}_1, \ldots, \mathcal{H}_\ell,p)$ such that, for each graph $G$, each rooted tree decomposition $\mathcal{D}= (T,r,\mathcal{X})$ of $G$, and each $t \in V(T)$:
  \begin{itemize}
  \item $|\process_{\acore,G,\mathcal{D}}(t)| \leq \ell^{|Y_t|} \cdot (p+1)^2\cdot \prod_{i \in \intv{1,\ell}}\mathsf{size}(\acore_i,\agraph,\mathcal{D},t)$
  \item $\timecomplexity_g(\acore,G,\mathcal{D}) =  \sum_{t \in V(T)}\left( \mathcal{O}(|\process_{\acore,G,\mathcal{D}}(t)|) + 2^{|Y_t|} \cdot \sum_{i \in \intv{1,\ell}} \timecomplexity_n(\acore_i,G, \mathcal{D},t)\right)$

  \end{itemize}
\end{theorem}
\begin{proof}
  We define $\acore$ such that for each $G,G',G'' \in \graphs$,
  \begin{align*}
    \accepting_{\acore}(G) &= \{((m_1,V_1),\ldots, (m_\ell,V_\ell),q) \mid \\
                           &~~~~~~~~~~~~q \leq p,\\
                           &~~~~~~~~~~~~V_1,\ldots,V_\ell \text{ is a partition of $V(G)$, and }\\
                           &~~~~~~~~~~~~\forall i \in \intv{1,\ell}, m_i \in \accepting_{\acore_i}(G[V_i])\ \},\\
    \process_{\acore,0}(G) &= \{((m_1,V_1),\ldots, (m_\ell,V_\ell),0) \mid \\
                           &~~~~~~~~~~~~V_1,\ldots,V_\ell \text{ is a partition of $V(G)$ and }\\
                           &~~~~~~~~~~~~\forall i \in \intv{1,\ell},  m_i \in \process_{\acore_i,0}(G[V_i])\ \},\\
    \process_{\acore,1}(G,G') &= \{(((m_1,V_1),\ldots, (m_\ell,V_\ell),q),((m'_1,V'_1),\ldots, (m'_\ell,V'_\ell),q')) \mid\\
                           &~~~~~~~~~~~~U_1,\ldots,U_\ell \text{ is a partition of $V(G) \cup V(G')$, }\\
                           &~~~~~~~~~~~~\forall i  \in \intv{1,\ell}, V_i = U_i \cap V(G),\\
                           &~~~~~~~~~~~~\forall i  \in \intv{1,\ell}, V'_i = U_i \cap V(G'),\\
                           &~~~~~~~~~~~~\forall i  \in \intv{1,\ell}, (m_i,m'_i) \in \process_{\acore_i,1}(G[V_i],G'[V'_i]),\\
                           &~~~~~~~~~~~~q \leq p,\text{ and}\\
                           &~~~~~~~~~~~~q = q' + \sum_{i \in \intv{1,\ell}} |\mathsf{edge}_G(U_i \setminus V(G), (V(G) \cup V(G'))\setminus U_i)|\ \},\\
  \end{align*}
  \begin{align*}
    \process_{\acore,2}(G,G',G'') &= \{(((m_1,V_1),\ldots, (m_\ell,V_\ell),q),((m'_1,V'_1),\ldots, (m'_\ell,V'_\ell),q'),\\ &~~~~~~~~~~~~~~~~~~~~~~~~~~~~~~~~~~~~~~~~~~~~~~~((m''_1,V''_1),\ldots, (m''_\ell,V''_\ell),q'')) \mid \\
                           &~~~~~~~~~~~~U_1,\ldots,U_\ell \text{ is a partition of $V(G) \cup V(G') \cup V(G'')$, }\\
                           &~~~~~~~~~~~~\forall i  \in \intv{1,\ell}, V_i = U_i \cap V(G),\\
                           &~~~~~~~~~~~~\forall i  \in \intv{1,\ell}, V'_i = U_i \cap V(G'),\\
                           &~~~~~~~~~~~~\forall i  \in \intv{1,\ell}, V''_i = U_i \cap V(G''),\\
                           &~~~~~~~~~~~~\forall i  \in \intv{1,\ell}, (m_i,m'_i,m''_i) \in \process_{\acore_i,2}(G[V_i],G'[V'_i],G''[V''_i]),\\
                           &~~~~~~~~~~~~q \leq p,\text{ and}\\
                           &~~~~~~~~~~~~q = q' + q'' +\sum_{i \in \intv{1,\ell}} |\mathsf{edge}_G(U_i \setminus V(G), (V(G) \cup V(G')\cup V(G''))\setminus U_i)|\ \}.\\
  \end{align*}

  We now prove that $\acore$ solves $\mathcal{H} = \graphpartition(\mathcal{H}_1, \ldots, \mathcal{H}_\ell,p)$.
  First note that $\Sigma_{\acore} = \Gamma \cup \mathbb{N} \cup \bigcup_{i \in \intv{1,\ell}}\Sigma_{\acore_i}$.
  Let $G$ be a graph and $\mathcal{D}  = (T,r,\mathcal{X})$ be a rooted tree decomposition of $G$.

  Assume first that $G \in \mathcal{H}$.
  Then, by definition, there exists $V_1, \ldots, V_\ell$, partition of $V(G)$ such that for each $i \in \intv{1,\ell}$, $G[V_i] \in \mathcal{H}_i$ and $\sum_{1 \leq i < j \leq \ell}|\mathsf{edge}_G(V_i,V_j)| \leq p$.
  As, for each $i \in \intv{1,\ell}$, $\mathcal{D}[V_i]$ is a rooted tree decomposition of $G[V_i]$ and $G[V_i] \in \mathcal{H}_i$, there exists  $\alpha_i: V(T) \to \word_{\acore_i}$, a $(\acore_i,G[V_i],\mathcal{D}[V_i])$-witness.
  Note that this witness exists as $G[V_i] \in \mathcal{H}_i$.
  We define $\alpha: V(T) \to \word_\acore$. 
  such that for each
  $t \in V(T)$,

  \begin{displaymath}
  \alpha(t) = ((\alpha_1(t),X_t \cap V_1), \ldots, (\alpha_\ell(t),X_t \cap V_\ell),
  \sum_{i \in \intv{1,\ell}} |\mathsf{edge}_G((Z_t \cap V_i) \setminus X_t, Z_t \setminus V_i)|).
\end{displaymath}
  By construction of $\acore$, $\alpha$ is a $(\acore,G,\mathcal{D})$-witness.

  Assume now that there exists a  $(\acore,G,\mathcal{D})$-witness $\alpha: V(T) \to \word_\acore$.
  Then by definition of $\acore$, for each $t \in V(T)$, $\alpha(t)$ is a $(\ell+1)$-tuple where the $\ell$ first coordinates are pairs with the shape $(m,V)$ where $m \in \word_\acore$ and $V \subseteq V(G)$, and where $\alpha(t).(\ell+1)$ is an integer in $\intv{0,p}$.
  For each $i \in \intv{1,\ell}$, let $V_i = \bigcup_{t \in V(T)} \alpha(t).(i).(2)$, and let $\alpha_i: V(T) \to \word_\acore$ be such that for each $t \in V(T)$, $\alpha_i(t) = \alpha(t).(i).(1)$.
  Then by definition of $\acore$, for each $i \in \intv{1,\ell}$, $\alpha_i$ is a $(\acore_i,G[V_i],\mathcal{D}_i)$-witness, and so, $G[V_i] \in \mathcal{H}_i$.
  Note also that by definition of $\accepting_{\acore}$, $\process_{\acore,0}$, $\process_{\acore,1}$, and $\process_{\acore,2}$, the partition selected by $\alpha$ at step $t \in V(T)\setminus \{r\}$ is consistent with the one selected at step $t'$ where $t'$ is the parent of $t$. Combined with the definition of a tree decomposition, we obtain that $(V_1, \ldots, V_\ell)$ is a partition of $V(G)$.
  Moreover, as $\alpha$ is a $(\acore,G,\mathcal{D})$-witness, we have that $\alpha(r).(\ell+1) \leq p$, and so $G \in \mathcal{H}$.

  Let us now analyze the needed running time for this algorithm.
  Let $G$ be a graph and $\mathcal{D} = (T,r,\mathcal{X})$ be a rooted tree decomposition of $G$.
  Then for each partition
  $V_1, \ldots, V_\ell$ of $Y_t$,
  there is at most $(p+1)^2$ ways to combine $p'$ and $p''$ and so
  there are at most $(p+1)^2\cdot \prod_{i \in \intv{1,\ell}}\size(\acore_i,\agraph,\mathcal{D},t)$ ways to construct an element of $\process_{\acore,G,\mathcal{D}}(t)$ consistent with the partition.
  Moreover, we have $\ell^{|Y_t|}$ possible partitions of the set $|Y_t|$ into $\ell$ sets.
  Thus $|\process_{\acore,G,\mathcal{D}}(t)| \leq \ell^{|Y_t|} \cdot (p+1)^2\cdot \prod_{i \in \intv{1,\ell}}\mathsf{size}(\acore_i,\agraph,\mathcal{D},t)$.
  In order to construct the data of $\acore$ associated to $(G,\mathcal{D})$, we first need, for each $i \in \intv{1,\ell}$, to construct the data of $\acore_i$ associated to $(G[V],\mathcal{D}[V],t)$ for each $t \in V(T)$ and $V \subseteq Y_t$,  and then, for each $t \in V(T)$ try every partition of $Y_t$  and combine the corresponding data accordingly.
  Thus, we have
  \begin{align*}
    \forall t \in V(T),\ \timecomplexity_n(\acore,G,\mathcal{D},t) &= \mathcal{O}( |\process_{\acore,G,\mathcal{D}}(t)| ) +  2^{|Y_t|} \cdot \sum_{i \in \intv{1,\ell}} \timecomplexity_n(\acore_i,G, \mathcal{D},t) \text{ and}\\
    \timecomplexity_g(\acore,G,\mathcal{D}) &=  \sum_{t \in V(T)}\timecomplexity_n(\acore,G,\mathcal{D},t).
  \end{align*}
  The theorem follows.
  \end{proof}

  Coloring problems are graph partition problems where it is not needed to keep track of the number of transversal edges.
  Thus the dynamic cores we present for the coloring problems are simpler than the one providing for the graph partition problems.

\begin{theorem}
  \label{th:partition}
  Let $\ell$ be an integer,
  let $\mathcal{H}_1,\ldots,\mathcal{H}_\ell$ be graph classes and let,
  for each $i \in \intv{1,\ell}$, 
  $\acore_i$ be a dynamic core that solves $\mathcal{H}_i$.
  There exists a dynamic core $\acore$ that solves $\mathcal{H} = \partition(\mathcal{H}_1, \ldots, \mathcal{H}_\ell)$ such that, for each graph $G$, each rooted tree decomposition $\mathcal{D}= (T,r,\mathcal{X})$ of $G$, and each $t \in V(T)$:
  \begin{itemize}
  \item $|\process_{\acore,G,\mathcal{D}}(t)| \leq \ell^{|Y_t|} \cdot \prod_{i \in \intv{1,\ell}}\mathsf{size}(\acore_i,\agraph,\mathcal{D},t)$
  \item $\timecomplexity_g(\acore,G,\mathcal{D}) =  \sum_{t \in V(T)}\left( \mathcal{O}(|\process_{\acore,G,\mathcal{D}}(t)|) + 2^{|Y_t|} \cdot \sum_{i \in \intv{1,\ell}} \timecomplexity_n(\acore_i,G, \mathcal{D},t)\right)$

  \end{itemize}
\end{theorem}
\begin{proof}
  Using the same base as for Theorem~\ref{th:graphpartition}, we define $\acore$ such that for each $G,G',G'' \in \graphs$,
  \begin{align*}
    \accepting_{\acore}(G) &= \{((m_1,V_1),\ldots, (m_\ell,V_\ell)) \mid \\
                           &~~~~~~~~~~~~V_1,\ldots,V_\ell \text{ is a partition of $V(G)$ and }\\
                           &~~~~~~~~~~~~\forall i \in \intv{1,\ell}, m_i \in \accepting_{\acore_i}(G[V_i])\ \},\\
    \process_{\acore,0}(G) &= \{((m_1,V_1),\ldots, (m_\ell,V_\ell)) \mid \\
                           &~~~~~~~~~~~~V_1,\ldots,V_\ell \text{ is a partition of $V(G)$ and }\\
                           &~~~~~~~~~~~~\forall i \in \intv{1,\ell},  m_i \in \process_{\acore_i,0}(G[V_i])\ \},\\
    \process_{\acore,1}(G,G') &= \{(((m_1,V_1),\ldots, (m_\ell,V_\ell)),((m'_1,V'_1),\ldots, (m'_\ell,V'_\ell))) \mid\\
                           &~~~~~~~~~~~~U_1,\ldots,U_\ell \text{ is a partition of $V(G) \cup V(G')$, }\\
                           &~~~~~~~~~~~~\forall i  \in \intv{1,\ell}, V_i = U_i \cap V(G),\\
                           &~~~~~~~~~~~~\forall i  \in \intv{1,\ell}, V'_i = U_i \cap V(G'),\text{ and}\\
                           &~~~~~~~~~~~~\forall i  \in \intv{1,\ell}, (m_i,m'_i) \in \process_{\acore_i,1}(G[V_i],G'[V'_i])\ \},\\
    \process_{\acore,2}(G,G',G'') &= \{(((m_1,V_1),\ldots, (m_\ell,V_\ell)),((m'_1,V'_1),\ldots, (m'_\ell,V'_\ell)),((m''_1,V''_1),\ldots, (m''_\ell,V''_\ell))) \mid \\
                           &~~~~~~~~~~~~U_1,\ldots,U_\ell \text{ is a partition of $V(G) \cup V(G') \cup V(G'')$, }\\
                           &~~~~~~~~~~~~\forall i  \in \intv{1,\ell}, V_i = U_i \cap V(G),\\
                           &~~~~~~~~~~~~\forall i  \in \intv{1,\ell}, V'_i = U_i \cap V(G'),\\
                           &~~~~~~~~~~~~\forall i  \in \intv{1,\ell}, V''_i = U_i \cap V(G''),\text{ and}\\
                           &~~~~~~~~~~~~\forall i  \in \intv{1,\ell}, (m_i,m'_i,m''_i) \in \process_{\acore_i,2}(G[V_i],G'[V'_i],G''[V''_i])\ \}.\\
  \end{align*}

  The proof that $\acore$ solves $\mathcal{H} = \partition(\mathcal{H}_1, \ldots, \mathcal{H}_\ell)$ is omitted as it is  similar to the one provided for Theorem~\ref{th:graphpartition} at the difference that this time we do not keep track of the transversal edges. 
  \end{proof}

  Edge partitioning problems are really similar to coloring problem at the difference that the subgraphs we consider are induced by a set of edges instead of a set of vertices. 

  \begin{theorem}
  \label{th:edgepartition}
  Let $\ell$ be an integer,
  let $\mathcal{H}_1,\ldots,\mathcal{H}_\ell$ be graph classes and let,
  for each $i \in \intv{1,\ell}$, 
  $\acore_i$ be a dynamic core that solves $\mathcal{H}_i$.
  There exists a dynamic core $\acore$ that solves $\mathcal{H} = \edgepartition(\mathcal{H}_1, \ldots, \mathcal{H}_\ell)$ such that, for each graph $G$, each rooted tree decomposition $\mathcal{D}= (T,r,\mathcal{X})$ of $G$, and each $t \in V(T)$:
  \begin{itemize}
  \item$|\process_{\acore,G,\mathcal{D}}(t)| \leq \ell^{|E(G[Y_t])|} \cdot \prod_{i \in \intv{1,\ell}}\mathsf{size}(\acore_i,\agraph,\mathcal{D},t)$
  \item
        $    \timecomplexity_g(\acore,G,\mathcal{D}) =  \sum_{t \in V(T)}\left( \mathcal{O}(|\process_{\acore,G,\mathcal{D}}(t)|) + 2^{|E(G[Y_t])|} \cdot \sum_{i \in \intv{1,\ell}} \timecomplexity_n(\acore_i,G, \mathcal{D},t)\right)$

  \end{itemize}
\end{theorem}
\begin{proof}
  We define $\acore$ such that for each $G,G',G'' \in \graphs$,
  \begin{align*}
    \accepting_{\acore}(G) &= \{((m_1,E_1),\ldots, (m_\ell,E_\ell)) \mid \\
                           &~~~~~~~~~~~~E_1,\ldots,E_\ell \text{ is a partition of $E(G)$ and }\\
                           &~~~~~~~~~~~~\forall i \in \intv{1,\ell}, m_i \in \accepting_{\acore_i}(G[E_i])\ \},\\
    \process_{\acore,0}(G) &= \{((m_1,E_1),\ldots, (m_\ell,E_\ell)) \mid \\
                           &~~~~~~~~~~~~E_1,\ldots,E_\ell \text{ is a partition of $E(G)$ and }\\
                           &~~~~~~~~~~~~\forall i \in \intv{1,\ell},  m_i \in \process_{\acore_i,0}(G[E_i])\ \},\\
    \process_{\acore,1}(G,G') &= \{(((m_1,E_1),\ldots, (m_\ell,E_\ell)),((m'_1,E'_1),\ldots, (m'_\ell,E'_\ell))) \mid\\
                           &~~~~~~~~~~~~U_1,\ldots,U_\ell \text{ is a partition of $E(G) \cup E(G')$, }\\
                           &~~~~~~~~~~~~\forall i  \in \intv{1,\ell}, E_i = U_i \cap E(G),\\
                           &~~~~~~~~~~~~\forall i  \in \intv{1,\ell}, E'_i = U_i \cap E(G'),\text{ and}\\
                           &~~~~~~~~~~~~\forall i  \in \intv{1,\ell}, (m_i,m'_i) \in \process_{\acore_i,1}(G[E_i],G'[E'_i])\ \},\\
    \process_{\acore,2}(G,G',G'') &= \{(((m_1,E_1),\ldots, (m_\ell,E_\ell)),((m'_1,E'_1),\ldots, (m'_\ell,E'_\ell)),((m''_1,E''_1),\ldots, (m''_\ell,E''_\ell))) \mid \\
                           &~~~~~~~~~~~~U_1,\ldots,U_\ell \text{ is a partition of $E(G) \cup E(G') \cup E(G'')$, }\\
                           &~~~~~~~~~~~~\forall i  \in \intv{1,\ell}, E_i = U_i \cap E(G),\\
                           &~~~~~~~~~~~~\forall i  \in \intv{1,\ell}, E'_i = U_i \cap E(G'),\\
                           &~~~~~~~~~~~~\forall i  \in \intv{1,\ell}, E''_i = U_i \cap E(G''),\text{ and}\\
                           &~~~~~~~~~~~~\forall i  \in \intv{1,\ell}, (m_i,m'_i,m''_i) \in \process_{\acore_i,2}(G[E_i],G'[E'_i],G''[E''_i])\ \}.\\
  \end{align*}
  The proof that $\acore$ solves $\mathcal{H} = \edgepartition(\mathcal{H}_1, \ldots, \mathcal{H}_\ell)$ is omitted as it is, again, similar to the one provided for Theorem~\ref{th:graphpartition} at the difference that this time we partition over the edges instead of the vertices and there are no transversal edges to consider.
\end{proof}
\section{Applications}
\label{sec:application}
In this section we show how our results lead to significant simplification when looking for algorithms parameterized by treewidth.
We first confront our approach against well-known problems, namely \textsc{Vertex Cover} and \textsc{Graph $q$-Coloring}, showing that we obtain comparable running time.
We then show how this leads to quickly obtain algorithms for exotic problems, $\partition(\mathcal{T},\mathcal{T})$ in our example, for which describing an algorithm in the usual way would have been long and tedious.

\textsc{Vertex Cover}, corresponding to $\partition(\graphs_k,\mathcal{I})$ for some integer $k$, is well known to be solvable in time $2^{\w}\cdot n^{\mathcal{O}(1)}$ when a tree decomposition of width $\w$ of the input graph is given, see for instance~\cite[Theorem 7.9]{CyFoKoLoMaPiPiSa2015}, while, as shown by \cite{ImPaZa2001}, no algorithm running in time $2^{o(\tw)} \cdot n^{\mathcal{O}(1)}$ can solve it unless ETH fails.
Combining Observation~\ref{obs:indep}, Observation~\ref{obs:boudedgraphs}, and Theorem~\ref{th:partition}, we obtain a dynamic core that solves \textsc{Vertex Cover}.
Moreover, combined with Theorem~\ref{theorem:DynamicSolvability}, we obtain an algorithm solving \textsc{Vertex Cover} in time $2^{\w}\cdot n^{\mathcal{O}(1)}$ when a tree decomposition of width $\w$ of the input graph is given.

More generally, deletion problems are problems that attract a lot of attention and that can be considered as coloring problems.
Indeed, given a graph class $\mathcal{H}$, 
the \textsc{$\mathcal{H}$-deletion} corresponds to the problem $\partition(\graphs_k,\mathcal{H})$, for some integer $k$.
Moreover, we show, in Observation~\ref{obs:boudedgraphs} that \textsc{$\graphs_p$-recognition}, for some integer $p$, has a polynomial dynamic core.
Combining Observation~\ref{obs:boudedgraphs} with Theorem~\ref{th:partition}, we obtain that
if there exists a dynamic core such that recognizing \textsc{$\mathcal{H}$} can be done in time $2^{f(\tw)}\cdot n^{\mathcal{O}(1)}$ for some function $f$, then \textsc{$\mathcal{H}$-deletion} can be solved in time $2^{\mathcal{O}(\tw+f(\tw))}\cdot n^{\mathcal{O}(1)}$.

\bigskip

The most basic and well-known coloring problem is \textsc{Graph $q$-Coloring}. Again we obtain an asymptotically optimal algorithm, see for instance~\cite[Theorem 14.6 and Theorem 14.41]{CyFoKoLoMaPiPiSa2015}, by combining Observation~\ref{obs:indep} and Theorem~\ref{th:partition}.

\begin{corollary}
  \textsc{Graph $q$-Coloring} can be solved in time $q^{\w}\cdot n^{\mathcal{O}(1)}$ when a tree decomposition of width $\w$ of the input graph is given.
\end{corollary}
\begin{proof}
  Given a fixed integer $q$,
  the \textsc{Graph $q$-Coloring} problem corresponds to $\partition(\mathcal{H}_1, \ldots, \mathcal{H}_q)$ where for each $i \in \intv{1,q}$, $\mathcal{H}_i = \mathcal{I}$.
  The result follows from the combination of Observation~\ref{obs:indep} and Theorem~\ref{th:partition}.
\end{proof}

\bigskip

As discussed in the introduction, finding an algorithm for $\partition(\mathcal{T},\mathcal{T})$ parameterized by the treewidth of the input graph is a question of interest in bioinformatics.
By combining  Observation~\ref{obs:tree}, Theorem~\ref{th:partition}, and Theorem~\ref{theorem:DynamicSolvability}, we obtain an efficient algorithm solving $\partition(\mathcal{T},\mathcal{T})$.

\begin{corollary}
  \label{coro:tt}
  There exists an algorithm that solves $\partition(\mathcal{T},\mathcal{T})$ in time $2^{\mathcal{O}(\tw)}\cdot n^{\mathcal{O}(1)}$.
\end{corollary}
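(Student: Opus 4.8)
The plan is to observe that $\partition(\mathcal{T},\mathcal{T})$ is exactly the instance of $\partition(\mathcal{H}_1, \mathcal{H}_2)$ obtained by setting $\ell = 2$ and $\mathcal{H}_1 = \mathcal{H}_2 = \mathcal{T}$, and then to assemble the three ingredients already established: the dynamic core for $\mathcal{T}$ from Observation~\ref{obs:tree}, the partition construction of Theorem~\ref{th:partition}, and the algorithm producer of Theorem~\ref{theorem:DynamicSolvability}. First I would fix a rooted tree decomposition $\mathcal{D} = (T,r,\mathcal{X})$ of the input graph $\agraph$ of width $\w(\mathcal{D}) = \mathcal{O}(\tw)$ satisfying the normalization assumptions from Section~\ref{sec:prelim} (in particular $|Y_t| \leq |X_t| + 1 \leq \w(\mathcal{D}) + 2$ and $|V(T)| = n^{\mathcal{O}(1)}$); such a decomposition is computable by a standard approximation procedure.

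Next I would feed the dynamic core $\acore_{\mathcal{T}}$ of Observation~\ref{obs:tree} into Theorem~\ref{th:partition} with $\acore_1 = \acore_2 = \acore_{\mathcal{T}}$, obtaining a dynamic core $\acore$ that solves $\partition(\mathcal{T},\mathcal{T})$. The key point is that all the exponential factors stay within $2^{\mathcal{O}(\tw)}$. Indeed, since restricting a tree decomposition to an induced subgraph does not increase its width, Observation~\ref{obs:tree} gives $\mathsf{size}(\acore_{\mathcal{T}},\agraph,\mathcal{D},t) = 2^{\mathcal{O}(\w(\mathcal{D}))}\cdot n^{\mathcal{O}(1)} = 2^{\mathcal{O}(\tw)}\cdot n^{\mathcal{O}(1)}$, and because $\ell = 2$ is constant the bound of Theorem~\ref{th:partition} yields
\[
|\process_{\acore,\agraph,\mathcal{D}}(t)| \leq 2^{|Y_t|} \cdot \prod_{i \in \intv{1,2}}\mathsf{size}(\acore_i,\agraph,\mathcal{D},t) = 2^{\mathcal{O}(\tw)}\cdot n^{\mathcal{O}(1)},
\]
the product of two $2^{\mathcal{O}(\tw)}$ terms still being $2^{\mathcal{O}(\tw)}$.

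Then I would bound the running time. Using the per-node estimate $\timecomplexity(\acore_{\mathcal{T}},\agraph,\mathcal{D},t) = 2^{\mathcal{O}(\tw)}\cdot n^{\mathcal{O}(1)}$ coming from Observation~\ref{obs:tree}, the time formula of Theorem~\ref{th:partition} gives, for each $t \in V(T)$,
\[
\mathcal{O}(|\process_{\acore,\agraph,\mathcal{D}}(t)|) + 2^{|Y_t|} \cdot \sum_{i \in \intv{1,2}} \timecomplexity(\acore_i,\agraph, \mathcal{D},t) = 2^{\mathcal{O}(\tw)}\cdot n^{\mathcal{O}(1)},
\]
and summing over the $n^{\mathcal{O}(1)}$ nodes of $T$ keeps $\timecomplexity(\acore,\agraph,\mathcal{D}) = 2^{\mathcal{O}(\tw)}\cdot n^{\mathcal{O}(1)}$. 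Plugging $\acore$ into Theorem~\ref{theorem:DynamicSolvability} then produces an algorithm deciding $\partition(\mathcal{T},\mathcal{T})$ in time $\mathcal{O}\!\left(\sum_{t \in V(T)}|\process_{\acore,\agraph,\mathcal{D}}(t)| + \timecomplexity(\acore,\agraph,\mathcal{D})\right) = 2^{\mathcal{O}(\tw)}\cdot n^{\mathcal{O}(1)}$, as required.

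The only real subtlety, hardly an obstacle, is the bookkeeping of the exponential factors: one must check that taking $\ell = 2$ fixed prevents the $\ell^{|Y_t|}$ factor and the product over $i$ from exceeding the $2^{\mathcal{O}(\tw)}$ bound, and that Observation~\ref{obs:tree}, which records a global time bound, indeed supplies a per-node bound of the shape $2^{\mathcal{O}(\tw)}\cdot n^{\mathcal{O}(1)}$ usable inside Theorem~\ref{th:partition}. Both hold precisely because $\ell$ is a constant and $|V(T)| = n^{\mathcal{O}(1)}$.
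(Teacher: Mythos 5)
Your proof is correct and follows exactly the paper's own route: the paper likewise obtains the corollary by combining Observation~\ref{obs:tree}, Theorem~\ref{th:partition} (with $\ell=2$ and $\acore_1=\acore_2$ the core for $\mathcal{T}$), and Theorem~\ref{theorem:DynamicSolvability}. Your explicit bookkeeping of the $2^{\mathcal{O}(\tw)}$ factors is just a spelled-out version of what the paper leaves implicit.
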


To the best of our knowledge, the only other algorithm parameterized by treewidth for $\partition(\mathcal{T},\mathcal{T})$ was known using Courcelle's theorem and run in time $2^{2^{\Ocal(\tw)}}\cdot n^{\mathcal{O}(1)}$.

\section{Conclusion}
\label{sec:conclusion}

In this paper, we provide a generic tool for solving graph partition problems, coloring problems, and edge partition problems parameterized by the treewidth of the input graph.
In particular, the developed approach provides a way, different than Courcelle's theorem, to determine whether a problem is FPT parameterized by treewidth
and  also provides a first estimation of the expected running time of an algorithm solving the given problem.
We would like to highlight the quality of these estimations as, for some well-known problems, they are asymptotically optimal.

In this conclusion we want to stress that when solving graph partition problems, we count the number of transversal edges.
The illustrated technique allows, for instance, with some small modifications, to count separately transversal edges between different vertex sets of the partition.
One can ask for instance for a partition of the vertex set of the input graph into three sets $V_1$, $V_2$, $V_3$ such that there are $k_{1,2}$ edges between $V_1$ and $V_2$, at most $k_{2,3}$ edges edges between $V_2$ and $V_3$, and no edge between $V_1$ and $V_3$ for some integers $k_{1,2}$ and $k_{2,3}$.

This approach also allows us to work with balanced partition.
In this case, we first need to obtain the size of the input graph before constructing the dynamic core.
Indeed we will need to intersect each graph class we consider with the class of graph of size $q$ (or $q+1$ if needed) for some correctly calculated $q$.
For normalized cuts, this trick will not work as the target graph classes are not fixed before the algorithm starts to run.

More generally, we believe that using the dynamic programming core model, one can easily compose dynamic programming tree-decomposition-based algorithms with several added constraints.
Moreover we also believe that it is adapted to quickly study exotic problems parameterized by treewidth.

\bibliographystyle{abbrvnat}
\bibliography{biblio}
\label{sec:biblio}

\end{document}